\newcounter{mytempeqncnt}
\newacronym{ue}{UE}{user equipment}
\newacronym[plural=SPs,firstplural=scattering points (SPs)]{sp}{SP}{scattering point}
\newacronym{ofdm}{OFDM}{orthogonal frequency division multiplexing}
\newacronym{ula}{ULA}{uniform linear array}
\newacronym{awgn}{AWGN}{additive white Gaussian noise}
\newacronym{rcs}{RCS}{radar cross section}
\newacronym{aoa}{AOA}{angle of arrival}
\newacronym{aod}{AOD}{angle of departure}
\newacronym{toa}{TOA}{time of arrival}
\newacronym{crlb}{CRLB}{Cramér-Rao lower bound}
\newacronym{los}{LOS}{line of sight}
\newacronym{fdr}{FDR}{false discovery rate}
\newacronym{mac}{MAC}{medium access control}
\newacronym{isac}{ISAC}{integrated sensing and communication}
\newacronym{deb}{DEB}{delay error bound}
\newacronym{aeb}{AEB}{angle error bound}
\newacronym{rmse}{RMSE}{root mean square error}
\newacronym{omp}{OMP}{orthogonal matching pursuit}
\newacronym{music}{MUSIC}{multiple signal classification}
\newacronym{esprit}{ESPRIT}{estimation of signal parameters via rotational invariant techniques}
\newacronym{em}{EM}{expectation maximization}
\newacronym{ml}{ML}{maximum likelihood}
\newacronym{pdf}{PDF}{probability density function}
\newacronym{cdf}{CDF}{cumulative distribution function}
\newacronym{fwer}{FWER}{familywise error rate}
\def\C{\mathbb{C}}
\DeclareMathOperator{\arctantwo}{arctan2} 
\newcommand{\bs}[1]{\boldsymbol{#1}}
\DeclareMathOperator*{\argmax}{\arg\max}
\newtheorem{lemma}{Lemma}
\newtheorem{proposition}{Proposition}
\newcommand{\rev}[1]{\textcolor{black}{#1}}
\newcommand\blfootnote[1]{%
  \begingroup
  \renewcommand\thefootnote{}\footnote{#1}%
  \addtocounter{footnote}{-1}%
  \endgroup
}
\begin{document}

\title{Interference Detection and Exploitation\\for Multi-User Radar Sensing}
%
%
\author{%
\IEEEauthorblockN{Laurits~Randers\IEEEauthorrefmark{1}\IEEEauthorrefmark{4}, Martin~V.~Vejling\IEEEauthorrefmark{2}\IEEEauthorrefmark{3}\IEEEauthorrefmark{4}, Petar~Popovski\IEEEauthorrefmark{2}}
\IEEEauthorblockA{\IEEEauthorrefmark{1}Independent Researcher, Aalborg, Denmark (laurits.randers@gmail.com)}
\IEEEauthorblockA{\IEEEauthorrefmark{2}Dept. of Electronic Systems, Aalborg University, Denmark (\{mvv,petarp\}@es.aau.dk)}
\IEEEauthorblockA{\IEEEauthorrefmark{3}Dept. of Mathematical Sciences, Aalborg University, Denmark (mvv@math.aau.dk)}
\IEEEauthorblockA{\IEEEauthorrefmark{4}These authors contributed equally.}
}

\maketitle

\begin{abstract}
Integrated sensing and communication is a key feature in next-generation wireless networks, enabling joint data transmission and environmental radar sensing on shared spectrum.
In multi-user scenarios, simultaneous transmissions cause mutual interference on overlapping frequencies, leading to spurious target detections and degraded sensing accuracy. This paper proposes an interference detection and exploitation algorithm for sensing using spectrally interleaved orthogonal frequency division multiplexing. A statistically rigorous procedure is introduced to detect interference while controlling the familywise error rate. We propose an algorithm that estimates the angle by exploiting interference, while estimating the delay by avoiding the interference. Numerical experiments demonstrate that the proposed method reliably detects interference, and that the delay and angle estimation error approaches the Cram\'{e}r–Rao lower bound.
\end{abstract}

\begin{IEEEkeywords}
integrated sensing and communication, interference detection, MUSIC, orthogonal matching pursuit, Cram\'{e}r-Rao lower bound, mmWave, familywise error rate
\end{IEEEkeywords}


\section{Introduction}
\IEEEPARstart{I}{ntegration} of sensing and communication is a key technological advancement for the next generation of communication networks, and extensive attention have been paid to topics such as dual-functional sensing and communication waveform design, fundamental bounds on detection and sensing with existing or envisioned infrastructure, and practical algorithms to achieve these bounds \cite{Liu2022:Integrated,Liu2022:Fundamental,Gonzalez2024:Integrated}. A less studied aspect in the \gls{isac} literature is the problem of \gls{mac}, that is, multiple users sensing over a shared wireless medium. \textcolor{white}{\blfootnote{This work was partly funded by the Villum Investigator Grant ``WATER'' financed by the Villum Foundation, Denmark.}} 

Generally, wireless \gls{mac} protocols are either based on \emph{scheduling} where users are granted resources thereby guaranteeing no overlap, or \emph{random access} where users may cause collisions and interfere when occupying the same resources \cite{Gummalla2000:Wireless}. When user density is high, scheduling is always necessary, however, the communication overhead and delay incurred by scheduling can be limiting.
Collisions due to multi-user interference are fundamentally different in communication and sensing, respectively. In communication, the receiver can rely upon codes to detect that no correct message has been received, followed by transmission of a negative feedback to the sender, and they retry after a random time, as in ALOHA. In sensing, such code-based detection is not possible and collisions must be detected through variations in the received power \cite{Liu2024:Next}. We use this mechanism to detect and utilize interference, as shown through the next minimal example.
\begin{figure}[t]
    \centering
    \includegraphics[page=1, width=0.87\linewidth]{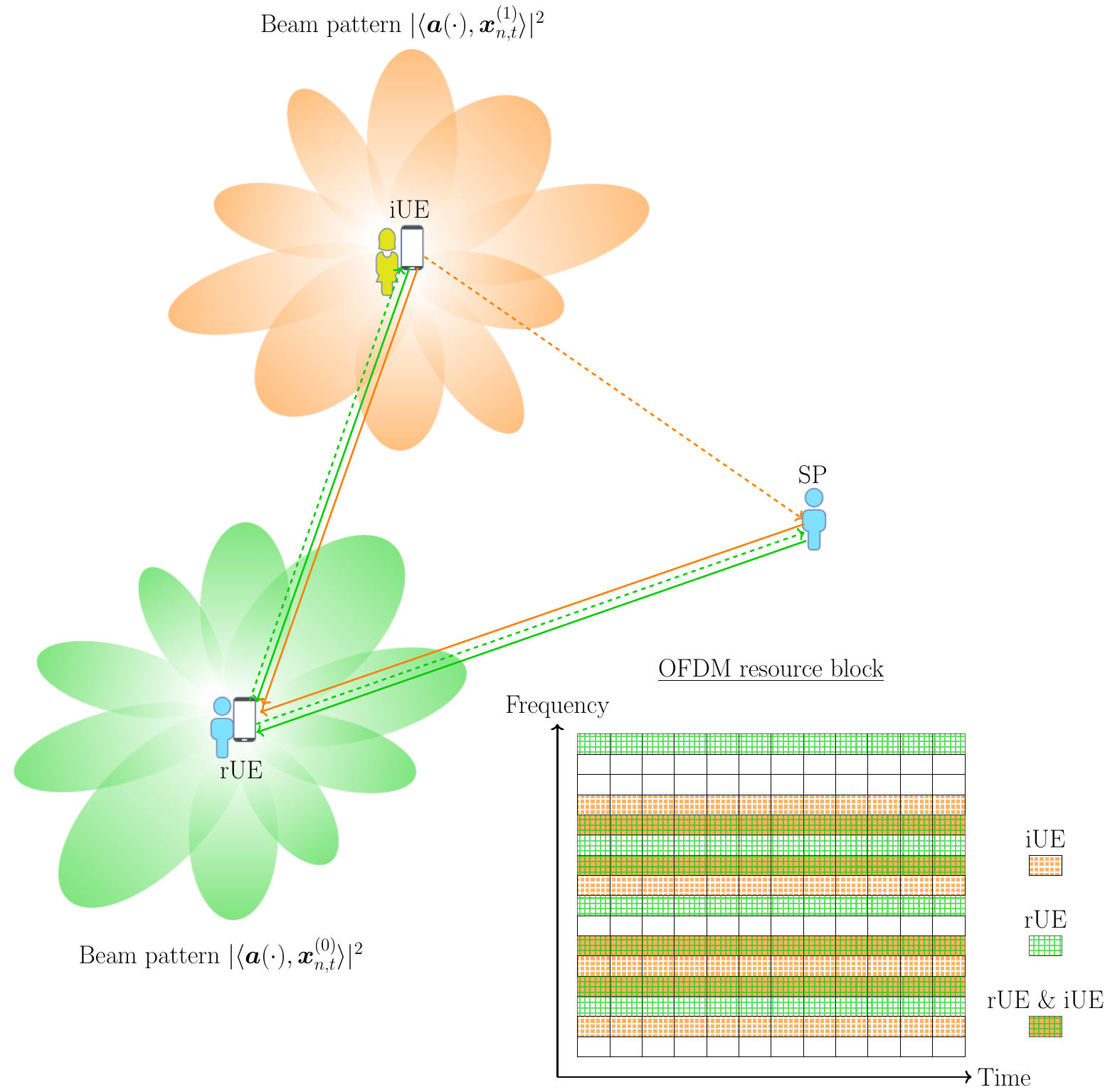}
    \caption{Illustration of the scenario. The orange and green lines depict signal paths of the iUE and rUE, respectively, with solid lines for signals impinging on the sensor. An OFDM resource block display how the iUE and rUE occupy distinct parts of the spectrum with overlap at some subcarriers.}
    \label{fig:scenario}
\end{figure}
%


An illustration of the considered system is given in Figure~\ref{fig:scenario}. A reference \gls{ue}, abbreviated rUE, performs monostatic sensing of the environment, consisting of an interfering \gls{ue}, abbreviated iUE, and a \gls{sp}. Both devices use \gls{ofdm} for sensing (and communication) on shared spectrum resources, and spectrally interleaved subcarriers are randomly allocated so that each device occupies only a subset of the total bandwidth. Without collaborating to allocate orthogonal spectrum resources, overlapping subcarriers may occur, causing missed detections or detection of spurious non-existing targets. Due to the strong \gls{los} path from the iUE to the rUE, interfered subcarriers exhibit higher received power, enabling their detection and exclusion for delay estimation. However, since the interference still carries angular information, we exploit the interfered resources in a secondary step for angle estimation, followed by data association.

Handling mutual interference in multi-user radar sensing systems can rely on two basic approaches~\cite{Niu2025:Interference}. In the first approach, the interfering signal is initially estimated and subsequently this is exploited for channel estimation \cite{Sit2018:Mutual}. While this can be effective, it, however, relies on estimating accurately the interfering signal, which is only possible under specific assumptions on the interfering signal. An information-theoretical analysis of collaborative two transmitter \gls{isac} systems is found in \cite{Liu2025:Fundamental}. The second approach detects which resources have interference and then removes those resources before channel parameter estimation \cite{Nuss2017:Novel}. This method is feasible without making any assumptions on the interfering signal, however, it is suboptimal and only effective if a relatively small proportion of the used resources have interference. In \cite{Nuss2017:Novel}, subcarriers in a spectrally interleaved \gls{ofdm} system with interference are detected by ad hoc thresholding of the received power.

We propose a novel interference detection and exploitation algorithm that reliably detects \gls{ofdm} resources with interference, and subsequently estimates delays using only the \gls{ofdm} resources without interference but estimates angles using all the \gls{ofdm} resources. For interference detection, we use the technique from \cite{Nuss2017:Novel}, and contribute with a theoretical justification for this approach as well as a hyperparameter selection method allowing for statistically rigorous \gls{fwer} control.

The main contributions can be summarized as follows:
\begin{itemize}
    \item We prove that the \gls{fwer} of the interference detection technique from \cite{Nuss2017:Novel} does not depend on the unknown channel.
    \item We propose an algorithm that estimates the angles using \gls{music} with all the \gls{ofdm} resources, estimates jointly the angles and delays using \gls{omp} with only the \gls{ofdm} resources without interference, and finally associates these two sets of estimated parameters.
    \item With a tailored Fisher analysis, we provide Cram\'{e}r-Rao lower bounds on the delay and angle estimators, thereby revealing that the information contained in the interfered resources is negligible except when estimating the angle to the interferer.
    \item Numerical experiments show that the squared error of the proposed delay and angle estimation algorithm nearly reaches the theoretical lower bound.
\end{itemize}

The remainder of the paper is organized as follows. The system model is presented in Section~\ref{sec:system_model}, followed by the proposed procedure in Section~\ref{sec:methods}.
In Section~\ref{sec:theoretical_performance_bounds}, we discuss the Fisher analysis tailored to the considered scenario. Numerical experiments are given and discussed in Section~\ref{sec:numerical_experiments}, and finally conclusions are stated in Section~\ref{sec:conclusion}.

\textit{Notation:} For $N\in \mathbb{N}$, we let $[N] = \{1,\dots,N\}$.
We denote by $\langle \cdot, \cdot\rangle$ and ${\rm Tr}(\cdot)$ the $L_2$ inner product and trace operator, respectively.
$(\cdot)^\top$, $(\cdot)^*$, and $(\cdot)^H$ denote transpose, complex conjugate, and hermitian, respectively.
We use $|\cdot|$ to denote either the absolute value of a complex number or the cardinality of a set depending on the context, and $\times$ to denote Cartesian product.
$\mathbbm{1}[\cdot]$ denotes the indicator function equal to one when the condition $\cdot$ is true, and zero otherwise.
For a matrix $\bs{A}$, $\bs{A}_{i,j}$ and $[\bs{A}]_{i,j}$ denote the entry in the $i$-th row and $j$-th column.
$\mathbb{C}^{N}$ denotes the space of $N$-dimensional complex vectors.


\section{System model}\label{sec:system_model}
The r\gls{ue} is located at the origin $\bs{p}_0=\bs{0}$, the i\glspl{ue} at $\bs{p}_i=[p_{i,x},p_{i,y}]^\top$, $i\in[O]$, and the passive \glspl{sp} at $\bs{p}_{O+l}=[p_{O+l,x},p_{O+l,y}]^\top$, $l\in[L]$. 
The \glspl{ue} operate in full-duplex \gls{ofdm} at carrier frequency $f_c$, such that the wavelength is $\lambda=c/f_c$ where $c$ is the speed of light. The subcarrier frequencies are $f_n=f_c+n\Delta_f$ for $n=0,\dots,N-1$, where $N$ is the number of subcarriers and $\Delta_f$ the spacing, yielding bandwidth $W=N\Delta_f$. The \glspl{ue} are equipped with \glspl{ula} of $N_{\rm u}$ antennas at $\lambda/2$ spacing. Note that cyclic prefix \gls{ofdm} is shown to be optimal for ranging \cite{Liu2025:OFDM}, and is also the implemented communication standard, making it attractive for wide deployment of \gls{isac}.

We denote by $\bs{x}_{n,t}^{(i)} \in \mathbb{C}^{N_{\rm u}}$ the transmitted signal on the $n$-th subcarrier and $t$-th \gls{ofdm} symbol for the $i$-th \gls{ue}.
Importantly, each \gls{ue} use spectrally interleaved subcarriers \cite{Sturm2013:Spectrally}, that is, they each use only a fraction of the total resources which is captured by the support set of $\bs{x}_{n,t}^{(i)}$. Specifically, we denote the active time-frequency resources of the $i$-th \gls{ue} by
$\Omega^{(i)} = \Omega^{(i)}_{\rm freq} \times \Omega^{(i)}_{\rm time}$, with $\Omega^{(i)}_{\rm freq} = \{n \in [N] : \bs{x}_{n,t}^{(i)} \neq \bs{0}~\forall t\}$ and $\Omega^{(i)}_{\rm time} = \{t \in [T] : \bs{x}_{n,t}^{(i)} \neq \bs{0}~\forall n\}$.
Then, $\Omega = \Omega^{(0)} \setminus \bigcup_{i=1}^{O} \Omega^{(i)}$ are the time-frequency resources, used by the rUE, which have no interference, and we use the notation $M^{(i)} = |\Omega^{(i)}|$, $N^{(i)} = |\Omega^{(i)}_{\rm freq}|$, and $T^{(i)} = |\Omega^{(i)}_{\rm time}|$.
For $(n, t) \in \Omega^{(i)}$, we model the transmitted signal as independent circularly-symmetric complex Gaussian, specifically, $\bs{x}_{n,t}^{(i)} \sim {\rm CN}\big(\bs{0}, \sigma_i^2\bs{I}\big)$ where $\sigma_i^2 = E_i/(M^{(0)} N_{\rm u})$ for power $E_i > 0$.

The received signal at the r\gls{ue} in the $n$-th subcarrier, $t$-th \gls{ofdm} symbol, and $k$-th antenna is modelled as
\begin{align}
    y_{n,t,k} &= \underbrace{\sum_{l=1}^{O+L} \alpha_{0,l} g^{(0)}_{n,t,k}(\theta_{l}, \theta_{l}, 2\tau_{l})}_{{\rm rUE-iUE/SP-rUE}} + \underbrace{\sum_{i=1}^O \alpha_{i} g^{(i)}_{n,t,k}(\phi_{i}, \theta_{i}, \tau_{i})}_{{\rm iUE-rUE}} \nonumber\\
    & \quad + \underbrace{\sum_{i=1}^O \sum_{\substack{l=1\\l\neq i}}^{O+L} \alpha_{i,l} g^{(i)}_{n,t,k}(\phi_{i,l}, \theta_{l}, \tau_{i, l})}_{{\rm iUE-iUE/SP-rUE}} + \varepsilon_{n,t,k}, \label{eq:signal}
\end{align}
where $g^{(i)}_{n,t,k}(\phi, \theta, \tau) = \langle \bs{a}^*(\phi), \bs{x}_{n,t}^{(i)}\rangle d_n(\tau) a_k(\theta)$, and
$\varepsilon_{n,t,k}$ are circularly-symmetric complex \gls{awgn} with zero mean and variance $\sigma^2$.
In this model, $\alpha_i,\alpha_{0,i},\alpha_{i,l} \in \C$ are the complex path coefficients modelling path loss due to distance and scattering as well as stochastic phase shifts due to small movements or rotations. $\theta_i$ are the \gls{aoa} at the rUE from the \glspl{sp} and iUEs, $\tau_{i}$ are the \glspl{toa} of the paths from the rUE to the \glspl{sp} and iUEs, $\tau_{i,l}$ is the \gls{toa} from the $i$-th iUE to the rUE through the $l$-th scatter path, and $d_n(\tau) = {\rm exp}\big(-j2\pi \Delta_f n\tau\big)$ is the delay response. $\phi_{i,l}$ is the \gls{aod} at the $i$-th iUE to the $l$-th iUE/\gls{sp}, $\phi_{i}$ is the \gls{aod} at the $i$-th iUE to the rUE. The array response is defined as $a_k(\theta) = N_{\rm u}^{-1/2} {\rm exp}(j\pi\sin(\theta)(k-1))$, $\bs{a}(\theta) = [a_1(\theta),\dots,a_{N_u}(\theta)]^\top$ denotes the array response vector, $\bs{A}(\theta,\phi)=\bs{a}(\theta)\bs{a}^{H}(\phi)$, and $\bs{A}(\theta)=\bs{A}(\theta, \theta)$. The specifications of the channel parameters are detailed in Appendix~\ref{subsec:channel_parameters}.


\section{Interference detection and exploitation}\label{sec:methods}
The purpose of interference detection is to test the null hypotheses $H_{n,t} : (n, t) \in \Omega$, hence, a rejection of the null means that in the $n$-th subcarrier and $t$-th \gls{ofdm} symbol there is evidence of interference. This poses a multiple testing problem, and we will concern ourselves with the traditional \gls{fwer} control. Let $V$ denote the number of false rejections. The idea of \gls{fwer} control is to guarantee $\mathbb{P}(V \geq 1) \leq \delta$ for a given significance level $\delta \in (0, 1)$. In Section~\ref{subsec:interference_detection} we outline the interference detection procedure, and subsequently in Section~\ref{subsec:theoretial_validation} we provide theoretical justification of the procedure, herein deriving a simple and practical approach to \gls{fwer} control.

Upon discovering the interfered resources, we can estimate jointly the delay and angles of the targets using standard techniques with only the non-interfered resources. However, estimation of the angle to the interferer can be substantially improved by using also the interfered resources. For this reason, we determine the angles in a separate estimation step using all the resources, and finally combine the results through data association. The details are given in Section~\ref{subsec:estimation}.

\subsection{Interference detection procedure}\label{subsec:interference_detection}
The technique, proposed in \cite{Nuss2017:Novel}, works by computing the powers $\gamma_n = \sum_{t\in \Omega^{(0)}_{\rm time}} \sum_{k=1}^{N_{\rm u}} |y_{n,t,k}|^2$ for each of the $N^{(0)}$ used subcarriers. Now, the subcarriers with interference are detected as $\hat{\Omega}^{\rm int}_{\rm freq} = \{n \in \Omega^{(0)}_{\rm freq} : \gamma_n > \gamma_{(\kappa)}\beta\}$ where $\gamma_{(1)}\leq \cdots \leq \gamma_{(N^{(0)})}$ are the ordered observed powers, and $\kappa \in [N^{(0)}]$ and $\beta \geq 1$ are user-specified parameters. The idea is to choose $\kappa$ sufficiently small such that $\gamma_{(\kappa)}$ does not contain interference; a practical option is $\kappa=1$.

Following the same procedure, we can detect interfered time slots by considering the powers $\Tilde{\gamma}_t = \sum_{n\in \Omega^{(0)}_{\rm freq}} \sum_{k=1}^{N_{\rm u}} |y_{n,t,k}|^2$ and so $\hat{\Omega}^{\rm int}_{\rm time} = \{t \in \Omega^{(0)}_{\rm time} : \Tilde{\gamma}_t > \Tilde{\gamma}_{(\kappa)}\Tilde{\beta}\}$ for $\Tilde{\beta} \geq 1$. Finally, the set of non-interfered resources are estimated by removing the detected resources, $\hat{\Omega} = \Omega^{(0)} \setminus (\hat{\Omega}^{\rm int}_{\rm freq} \times \hat{\Omega}^{\rm int}_{\rm time})$.

\subsection{Theoretical justification}\label{subsec:theoretial_validation}
Under the global null, that is, when there is no interference, for each of the $N^{(0)}$ used subcarriers $y_{n,t,k} \sim {\rm CN}(0, \varsigma_{n,k}^2)$ where 
\begin{equation*}
    \begin{split}
        \varsigma_{n,k}^2 &= \sigma^2 + \sigma_0^2\Big(\sum_{l=1}^{O+L} |\alpha_{0,l}|^2 + 2|\alpha_{0,l}|\sum_{j=1}^{l-1} |\alpha_{0,j}|\\
        &\qquad \times {\rm Re}\big\{ d_n(\tau_l-\tau_j) a_k(\theta_l) a_k^*(\theta_j) \langle \bs{a}(\theta_j),\bs{a}(\theta_l)\rangle\big\}\Big).
    \end{split}
\end{equation*}
In case the objects are all well-separated by the antenna array, meaning $\langle \bs{a}(\theta_j),\bs{a}(\theta_l)\rangle \approx 0$, then the variance of $y_{n,t,k}$ is approximately constant and equal to $\varsigma^2 = \sigma^2 + \sigma_0^2\sum_{l=1}^{O+L} |\alpha_{0,l}|^2$. Hence, approximately, $|y_{n,t,k}|^2 \sim {\rm Exponential}(\varsigma^2)$. Across different time slots, $|y_{n,t,k}|^2$ are independent, however, across the antenna array they are correlated since the transmitted signal $\bs{x}_{n,t}$ is shared. Yet, if $T^{(0)} \gg N_{u}$, it holds approximately that $\gamma_1,\dots,\gamma_{N^{(0)}} \stackrel{{\rm iid}}{\sim} {\rm Gamma}(\varrho, \varsigma^2)$, for $\varrho = T^{(0)} N_u$.

Our interest in the following is to find the minimum $\beta$, given $\kappa$, such that under the global null $\mathbb{P}(\gamma_{(N^{(0)})} > \gamma_{(\kappa)}\beta) \leq \delta$, where $\delta \in (0, 1)$ is the user-specified significance level. Such a procedure controls the \gls{fwer} at level $\delta$. For $\kappa = 1$, the following result shows that $\mathbb{P}(\gamma_{(N^{(0)})} > \gamma_{(\kappa)}\beta)$ does not depend on the unknown channel, and in fact only depends on $\beta$ and the shape parameter $\varrho = T^{(0)} N_u$, that is, the number of terms in the sum of powers. Hence, finding the minimum $\beta$, denoted $\beta^*$, satisfying $\mathcal{P}(\gamma_{(N^{(0)})} > \gamma_{(1)}\beta) \leq \delta$ can be done offline, thereby presenting no practical challenge.
\begin{proposition}\label{prop:threshold}
    Let $\gamma_1,\dots,\gamma_n \stackrel{{\rm iid}}{\sim} {\rm Gamma}(\varrho, \varsigma^2)$ with order statistics $\gamma_{(1)}\leq \cdots \leq \gamma_{(n)}$. Then, $\mathbb{P}(\gamma_{(n)} > \gamma_{(1)}\beta)$ does not depend on $\varsigma^2$.
\end{proposition}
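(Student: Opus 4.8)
The plan is to exploit the fact that $\varsigma^2$ enters the Gamma law purely as a \emph{scale} parameter, so that the event $\{\gamma_{(n)} > \gamma_{(1)}\beta\}$, being defined through a ratio of order statistics, is invariant under rescaling. Concretely, I would first record the scaling property of the Gamma distribution: if $Z \sim {\rm Gamma}(\varrho, 1)$ then $\varsigma^2 Z \sim {\rm Gamma}(\varrho, \varsigma^2)$. Hence each $\gamma_i$ admits the representation $\gamma_i = \varsigma^2 Z_i$ with $Z_1,\dots,Z_n \stackrel{{\rm iid}}{\sim} {\rm Gamma}(\varrho, 1)$, where the $Z_i$ are obtained from the $\gamma_i$ by dividing by the fixed positive constant $\varsigma^2$.

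Next I would observe that multiplication by the positive constant $\varsigma^2$ is a strictly increasing map on $[0,\infty)$ and therefore preserves the ordering of the sample. Consequently the order statistics transform covariantly, $\gamma_{(k)} = \varsigma^2 Z_{(k)}$ for every $k \in [n]$; in particular $\gamma_{(n)} = \varsigma^2 Z_{(n)}$ and $\gamma_{(1)} = \varsigma^2 Z_{(1)}$. This step is the only one requiring a moment's care, since it is what lets the scale factor be pulled out of the order statistics simultaneously in the numerator and the ``denominator'' side of the inequality.

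With this in hand the conclusion is immediate. Since $\varsigma^2 > 0$, the defining event satisfies the chain of equivalences
\begin{equation*}
    \{\gamma_{(n)} > \gamma_{(1)}\beta\} = \{\varsigma^2 Z_{(n)} > \varsigma^2 Z_{(1)}\beta\} = \{Z_{(n)} > Z_{(1)}\beta\},
\end{equation*}
where the scale factor cancels on both sides of the strict inequality. Taking probabilities yields $\mathbb{P}(\gamma_{(n)} > \gamma_{(1)}\beta) = \mathbb{P}(Z_{(n)} > Z_{(1)}\beta)$, and the right-hand side is the probability of an event defined solely in terms of the standardized variables $Z_i \sim {\rm Gamma}(\varrho, 1)$, whose joint law depends only on $\varrho$ and not on $\varsigma^2$. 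Thus $\mathbb{P}(\gamma_{(n)} > \gamma_{(1)}\beta)$ is a function of $\beta$ and $\varrho$ alone, which is the claim.

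I do not anticipate a genuine obstacle here: the argument is a clean scale-invariance reduction, and its whole content is that a ratio of order statistics is unaffected by a common positive rescaling. The only point worth stating explicitly, rather than glossing over, is the covariance of the order statistics under scaling, since that is precisely where the distribution-free nature of the threshold originates. An equivalent route would be to integrate over the joint density of $(\gamma_{(1)},\gamma_{(n)})$ and perform the substitution $\gamma_i = \varsigma^2 Z_i$, after which the Jacobian and the explicit $\varsigma^2$ factors in the Gamma density combine so that every occurrence of $\varsigma^2$ cancels; I would, however, prefer the coordinate-free representation above as it avoids any bookkeeping.
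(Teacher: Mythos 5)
Your proof is correct, and it takes a genuinely different route from the paper. The paper proceeds via an explicit computation: a technical lemma first derives the joint \gls{cdf} and then the joint \gls{pdf} of the pair $(\gamma_{(n)}, \gamma_{(1)})$, expresses $\mathbb{P}(\gamma_{(n)} > \gamma_{(1)}\beta)$ as a double integral against that density with the Gamma \gls{pdf} and \gls{cdf} plugged in, and finally performs the substitution $x = u/\varsigma^2$, $y = v/\varsigma^2$ to watch every occurrence of $\varsigma^2$ cancel. Your scale-invariance argument --- representing $\gamma_i = \varsigma^2 Z_i$ with $Z_i \stackrel{{\rm iid}}{\sim} {\rm Gamma}(\varrho,1)$, noting that order statistics transform covariantly under a strictly increasing map, and cancelling $\varsigma^2$ inside the event --- reaches the same conclusion with no integration at all, and is in fact more general: it applies verbatim to any scale family, not just the Gamma. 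What the paper's longer route buys is the explicit integral representation of $\mathbb{P}(\gamma_{(n)} > \gamma_{(1)}\beta)$ as a function of $\beta$ and $\varrho$ alone, which is precisely what is evaluated numerically to find the minimal threshold $\beta^*$ (and which parallels the p-value formula in Eq.~\eqref{eq:p-values}); your argument establishes invariance but, to compute the probability, you would still need to evaluate $\mathbb{P}(Z_{(n)} > Z_{(1)}\beta)$ by some separate means (e.g., Monte Carlo on the unit-scale variables, or rederiving the paper's integral). You anticipated this yourself in your closing remark, where the ``equivalent route'' you sketch is essentially the paper's proof.
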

\begin{proof}
    For a proof see Appendix~\ref{sec:proof_prop1}.
\end{proof}
In Figure~\ref{fig:beta_search}, we consider a scenario with settings as described in Section~\ref{sec:numerical_experiments}. Then, comparing the theoretical and empirical \gls{fwer} for $\kappa=1$ we observe that the approximating is accurate. Further, it shows that $\beta^*=1.561$ is the minimum $\beta$ satisfying $\mathcal{P}(\gamma_{(N^{(0)})} > \gamma_{(1)}\beta) \leq \delta$ when $\delta=0.01$. Exploring \gls{fwer} control, with this procedure, for other channel models is a topic for future investigation.

Letting $\gamma_i/\gamma_{(1)}$ be a test statistic, we can also derive p-values under the approximation, $p_i=1-F_{\gamma_i/\gamma_{(1)}}(z)$ where $F_{\gamma_i/\gamma_{(1)}}(\cdot)$ is the \gls{cdf} for the random variable $\gamma_i/\gamma_{(1)}$ and $z$ is the observed test statistic. Again the scale parameter vanishes:
\begin{equation}\label{eq:p-values}
    p_i = (n-1) \int_{0}^{1} \big(1-F_\varrho\big(zF_\varrho^{-1}(1-w)\big)\big) w^{n-2} {\rm d}w,
\end{equation}
where $f_\varrho(\cdot)$ and $F_\varrho(\cdot)$ are the \gls{pdf} and \gls{cdf} of a Gamma distributed random variable with shape parameter $\varrho$ and unit rate, respectively. Unfortunately, these p-values are jointly dependent through $\gamma_{(1)}$ complicating the validity of multiplicity correction procedures, and further, evaluating the integral in Eq.~\eqref{eq:p-values} can be prohibitive in practical deployments.

\begin{figure}[t]
    \centering
    \includegraphics[width=0.75\linewidth, page=2]{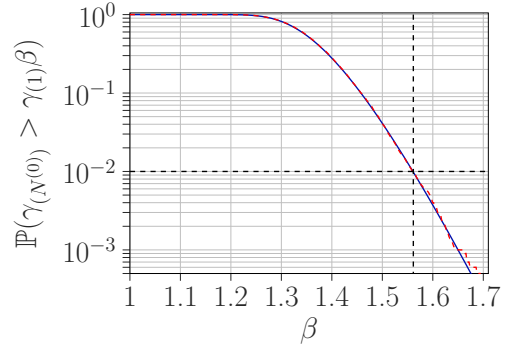}
    \caption{\gls{fwer} under the global null for varying $\beta$. The solid blue line and the dashed red line is the theoretical and empirical \gls{fwer}, respectively.}
    \label{fig:beta_search}
\end{figure}

\begin{figure*}
\normalsize
\setcounter{mytempeqncnt}{\value{equation}}
\begin{equation}\label{eq:cov}
    \bs{\Sigma}_{n,t} = \sigma^2 \bs{I} + \sum_{i=1}^{O} \sigma^2_i \mathbbm{1}[(n,t)\in\Omega^{(i)}] \Big(\bs{A}(\theta_i) |\alpha_i|^2 + \sum_{\substack{l=1\\l\neq i}}^{O+L} \bs{A}(\theta_{l}) |\alpha_{i,l}|^2 + \bs{V}_n^{i,l} + (\bs{V}_n^{i,l})^{H} + \sum_{j=1}^{l-1} \bs{W}_{n}^{i,l,j} + (\bs{W}_{n}^{i,l,j})^{H}\Big),
\end{equation}
\begin{equation*}
    \text{where } \bs{V}_n^{i,l} = \bs{A}(\theta_i,\theta_{l}) \alpha_i\alpha_{i,l}^* \langle \bs{a}(\phi_{i,l}), \bs{a}(\phi_i)\rangle d_n(\tau_i - \tau_{i,l}),
    \text{ and } \bs{W}_{n}^{i,l,j} = \bs{A}(\theta_{l},\theta_{j}) \alpha_{i,l} \alpha_{i,j}^* \langle \bs{a}(\phi_{i,j}), \bs{a}(\phi_{i,l}) \rangle d_n(\tau_{i,l} - \tau_{i,j}).
\end{equation*}
\setcounter{equation}{\value{mytempeqncnt}}
\hrulefill
\vspace*{-10pt}
\end{figure*}

\subsection{Delay and angle estimation}\label{subsec:estimation}
The proposed approach to delay and angle estimation consists of three steps: (i) estimate angles $\hat{\theta}_j^{\rm init}$ using the \gls{music} algorithm with all the $\Omega^{(0)}$ resources \cite{Stoica1989:MUSIC}; (ii) estimate jointly delays and angles $(\hat{\tau}_i, \hat{\theta}_i)$ using the \gls{omp} algorithm with the non-interfered resources $\hat{\Omega}$ \cite{Tropp2007:OMP}; (iii) associate estimated parameters $\hat{\theta}_j^{\rm init}$ with $(\hat{\tau}_i, \hat{\theta}_i)$. We use the \gls{music} algorithm in the first step since \gls{omp} fails in the presence of an unknown interference signal, meanwhile \gls{music} is robust against this for angle estimation. \rev{We assume the number of targets are perfectly estimated for instance using \cite{Anderson1963:Asymptotic}.}


\emph{(i) MUSIC:}
The covariance matrix across the antenna array is estimated as $\hat{\bs{\Gamma}} = \sum_{(n,t)\in\Omega^{(0)}} \bs{y}_{n,t}\bs{y}_{n,t}^{H}$ where $\bs{y}_{n,t} = [y_{n,t,1}, \dots, y_{n,t,N_u}]^\top$. The eigenvectors of $\hat{\bs{\Gamma}}$ are computed and ordered according to the eigenvalues, denoted $\bs{u}_1,\dots,\bs{u}_{N_u}$. A basis for the noise subspace is then $\bs{U}_n = [\bs{u}_1,\dots,\bs{u}_{N_u-s}]$ where $s$ is the number of impinging signals. Finally, $\hat{\theta}_i^{\rm init}$ are determined as the $s$ largest local maxima in the \gls{music} pseudo spectrum \cite{Stoica1989:MUSIC}:
\begin{equation*}
    P_{\rm music}(\theta) = \frac{1}{\bs{a}^{H}(\theta)\bs{U}_n\bs{U}_n^{H}\bs{a}(\theta)}.
\end{equation*}

\emph{(ii) Orthogonal matching pursuit:}
Denote by $\bs{h}(\theta, \tau) = [\langle \bs{a}^*(\phi), \bs{x}_{n,t}^{(0)}\rangle d_n(\tau) \bs{a}^\top(\theta)]_{(n,t) \in \hat{\Omega}}$, and let the so-called residual be initialized as $\bs{r}_0 = \bs{y}^{\rm est} = [\bs{y}^\top_{n,t}]_{(n,t) \in \hat{\Omega}}$. \gls{omp} commences by solving
\begin{equation*}
    \hat{\tau}_1, \hat{\theta}_1 = \argmax_{\tau, \theta}  \big|\langle \bs{h}^*(\theta, \tau), \bs{r}_0\rangle\big|,
\end{equation*}
and then defining $\bs{\Psi}_1 = [\bs{h}(\hat{\tau}_1, \hat{\theta}_1)]$. Then, the residual is updated as
\begin{equation*}
    \bs{r}_1 = \bs{y}^{\rm est} - \bs{\Psi}_1 (\bs{\Psi}_1^{H} \bs{\Psi}_1)^{-1} \bs{\Psi}_1^{H} \bs{y}^{\rm est},
\end{equation*}
and the algorithm proceeds to find the next pair of channel parameters $\hat{\tau}_2$ and $\hat{\theta}_2$, and so on. The algorithm proceeds until $s$ pairs of channel parameters have been determined \cite{Tropp2007:OMP}.

\emph{(iii) Data association:}
We assign a cost
\begin{equation*}
    \bs{C}_{i, j} = {\rm min}\big((\hat{\theta}_i - \hat{\theta}_j^{\rm init})^2, (\lvert\hat{\theta}_i - \hat{\theta}_j^{\rm init}\rvert - 2\pi)^2\big)
\end{equation*}
for the association of $\hat{\theta}_j^{\rm init}$ and $(\hat{\tau}_i, \hat{\theta}_i)$.
Let $\bs{X}\in\{0,1\}^{s\times s}$ be the boolean assignment matrix such that $\bs{X}_{i,j} = 1$ if $(\hat{\tau}_i, \hat{\theta}_i)$ is associated with $\hat{\theta}_j^{\rm init}$, and $\bs{X}_{i,j} = 0$ otherwise. The data association problem can then be expressed by the following optimization problem
\begin{align*}
    {\rm minimize}~~&~ \text{Tr}(\bs{X}^\top \bs{C})\\
    {\rm subject~to}~~&~\bs{X}_{i,j}\in\{0,1\},~\forall i,j,\\
    &\sum_{i=1}^{s} \bs{X}_{i,j} = 1,~\forall j, \:~~ \sum_{j=1}^s \bs{X}_{i,j} = 1,~\forall i,
\end{align*}
solvable with standard algorithms \cite{Murty1968:Algorithm}.
The final parameter pairs are $(\hat{\tau}_i, \hat{\theta}_i^{\rm upd})$ where $\hat{\theta}_i^{\rm upd} = \hat{\theta}_i + \rev{\bs{X}_{i, w}}(\hat{\theta}_{w}^{\rm init}-\hat{\theta}_i)$ for $i\in[s]$, and $w = \argmax_{j\in[s]} P_{\rm music}(\theta_j^{\rm init})$ is an index associated with an interferer.
%

\section{Cram\'{e}r-Rao lower bound}\label{sec:theoretical_performance_bounds}
In this section, we present the \gls{crlb} on the estimation of the \gls{toa} and \gls{aoa} of the targets in the presence of interference. For conciseness, we do not provide analytical expressions for the \gls{crlb}, but expose how to numerically compute the \gls{crlb}.

Recall the signal model Eq.~\eqref{eq:signal} and notice that $\bs{y}_{n,t} \sim {\rm CN}\big(\bs{\mu}_{n,t}, \bs{\Sigma}_{n,t}\big)$ where
\begin{equation*}
    \bs{\mu}_{n,t} = \sum_{i=1}^{O+L} \alpha_{0,i} \langle \bs{a}^*(\theta_i), \bs{x}_{n,t}^{(0)}\rangle d_n(2\tau_{i}) \bs{a}(\theta_i),
\end{equation*}
and $\bs{\Sigma}_{n,t}$ is given in Eq.~\eqref{eq:cov}.
The Fisher information matrix is then given by \cite[Sec.~B.3]{Stoica2005:Spectral}
\begin{equation*}
    \begin{split}
        [\bs{F}_{n,t}]_{i,j} &= {\rm Tr}\Big(\bs{\Sigma}_{n,t}^{-1} \frac{\partial \bs{\Sigma}_{n,t}}{\partial \bs{\eta}_{i}} \bs{\Sigma}_{n,t}^{-1} \frac{\partial \bs{\Sigma}_{n,t}}{\partial \bs{\eta}_{j}}\Big) \\
        &\qquad\quad + 2{\rm Re}\Big\{\frac{\partial \bs{\mu}^H_{n,t}}{\partial \bs{\eta}_i} \bs{\Sigma}_{n,t}^{-1} \frac{\partial \bs{\mu}_{n,t}}{\partial \bs{\eta}_j}\Big\},
    \end{split}
\end{equation*}
where $\bs{\eta} = [\Tilde{\bs{\eta}}^\top, \bar{\bs{\eta}}^\top]^\top$ for parameters of interest
\begin{equation*}
    \Tilde{\bs{\eta}} = [\tau_{1}, \dots, \tau_{O+L}, \theta_1, \dots, \theta_{O+L}]^\top,
\end{equation*}
and nuisance parameters
\begin{equation*}
    \bar{\bs{\eta}} = [{\rm Re}\{\alpha_{0,1}\}, \dots, {\rm Re}\{\alpha_{O,L}\}, {\rm Im}\{\alpha_{0,1}\}, \dots, {\rm Im}\{\alpha_{O,L}\}]^\top.
\end{equation*}
For conciseness of presentation, we omit the details of the partial derivatives entering in the Fisher information matrix, remarking that the expressions are computed through straightforward calculus \cite{AbuShaban2018Bounds,Zheng2024:JrCUP,Vejling2025:Resolution}.

The Fisher information matrix accumulating information from all \gls{ofdm} symbols and subcarriers is $\bs{F}_{i,j} = \sum_{(n,t)\in \Omega^{(0)}} [\bs{F}_{n,t}]_{i,j}$, and from this we compute the Cram\'{e}r-Rao lower bound as $\bs{K} = \bs{F}^{-1}$. Here, $\bs{K}$ is interpreted as the lower bound on the covariance matrix of the estimator of $\bs{\eta}$ given the signal model. To compare performance for angle and delay estimation, we will use the \gls{aeb} and \gls{deb} defined as ${\rm AEB}_l = \sqrt{\bs{K}_{O+L+l, O+L+l}}$ and ${\rm DEB}_l = \sqrt{\bs{K}_{l, l}}$, respectively.




\section{Numerical experiments}\label{sec:numerical_experiments}
We define a simple scenario with a single iUE and a single \gls{sp} located at $[5, 14]$ m and $[17, 6]$ m, respectively. Each device has $N_u = 6$ antennas, and the \gls{ofdm} resource block consists of $T=30$ time slots each with a total of $N=64$ subcarriers. The rUE uses all time slots, but only $32$ spectrally interleaved subcarriers, selected random uniformly. Similarly, the iUE also uses all time slots. The carrier frequency is $f_c=15$ GHz with a subcarrier spacing of $\Delta_f = 250$ kHz. The noise power is $\sigma^2=-173.85$ dBm. In the numerical experiments, we will vary the number of interfered subcarriers, and the transmission powers. The scenario settings are summarized in Table~\ref{tab:params}.

\setlength{\tabcolsep}{0.1em}
\renewcommand{\arraystretch}{1.1}
\begin{table}[t]
    \centering
    \caption{Scenario settings}\label{tab:params}
    \begin{tabular}{ccc}
        \toprule
        \textbf{Parameter}	        & \textbf{Value}    & \textbf{Description} \\ \midrule
        $N_u$                       & 6                 & Number of antennas. \\
        $T$                         & 30                & Number of OFDM symbols. \\
        $N$                         & 64                & Number of subcarriers. \\
        $T^{(0)}$ & 30              & Number of OFDM symbols used by the rUE. \\
        $N^{(0)}$ & 32              & Number of subcarriers used by the rUE. \\
        $f_c$                       & 15 GHz            & Carrier frequency. \\
        $\Delta_f$                  & 250 kHz           & Subcarrier spacing. \\
        $\sigma^2$                  & $-173.85$ dBm     & Noise power. \\
        $\bs{p}_0$                  & $[0, 0]$ m        & Position of rUE. \\
        $\bs{p}_1$                  & $[5, 14]$ m       & Position of iUE. \\
        $\bs{p}_2$                  & $[17, 6]$ m       & Position of SP. \\
        \bottomrule
    \end{tabular}
\end{table}
\begin{figure}[t]
    \centering
    \includegraphics[width=0.67\linewidth, page=5]{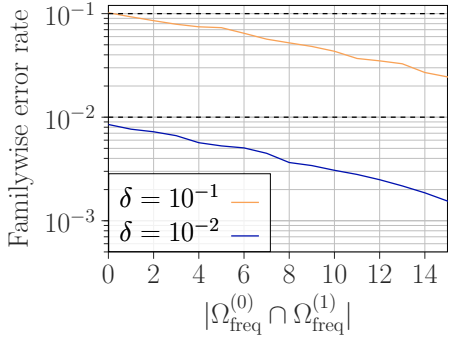}
    \caption{\gls{fwer} when $\kappa=1$ and $\delta\in\{10^{-1},10^{-2}\}$ for varying number of overlapping subcarriers.}
    \label{fig:familywise_error_rate}
\end{figure}
\begin{figure}[t]
    \centering
    \includegraphics[width=0.67\linewidth, page=3]{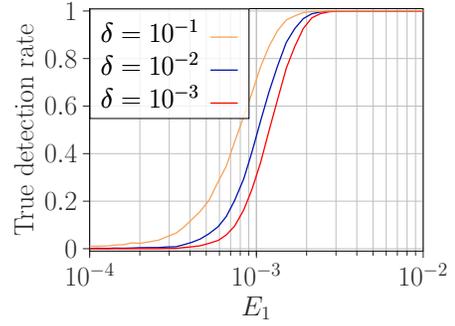}
    \caption{True detection rate when $\kappa=1$, $\delta\in\{10^{-1},10^{-2},10^{-3}\}$, and $E_0=0.05$ for varying interference powers, $E_1$.}
    \label{fig:true_detection_rate}
\end{figure}

Two baseline techniques are considered as benchmarks:
\emph{(1) Naïve}: This baseline ignores the issues of interference, performing sensing using \gls{omp} with all the used resources.\\
\emph{(2) Oracle}: This baseline knows which resources have interference and otherwise proceeds as the proposed algorithm.

We compare the proposed method to the baselines using the \gls{rmse}. Denote by $\hat{\tau}_l$ and $\hat{\theta}_l$ the delay and angle estimators for the $l$-th target. Then, the \gls{rmse} for delay and angle estimation is ${\rm RMSE}_l^{\rm delay} = \mathbb{E}[(\hat{\tau}_l - \tau_l)^2]^{1/2}$ and ${\rm RMSE}_l^{\rm angle} = \mathbb{E}[(\hat{\theta}_l - \theta_l)^2]^{1/2}$, respectively.


First, let us validate the \gls{fwer} guarantees, by reporting the empirical \gls{fwer} of the interference detection algorithm in Figure~\ref{fig:familywise_error_rate}, when $\kappa=1$, $\delta\in\{10^{-1},10^{-2}\}$, and for varying number of overlapping subcarriers. We observe that in all cases, the \gls{fwer} is upper bounded as specified by the theory. Moreover, we notice that as the number of overlapping subcarriers increase, the procedure exhibits some excess conservativeness, and it is a direction of future study to avoid this. Next, let us consider the statistical power of the interference detection. Figure~\ref{fig:true_detection_rate} displays the true detection rate when $\kappa=1$, $\delta\in\{10^{-1},10^{-2},10^{-3}\}$, and $E_0=0.05$ for varying interference powers, $E_1$. This result shows that due to the relatively high power of the direct iUE-rUE link, as compared to the backscattered signals, interference is consistently detected, even when the interference power is relatively low and with a strict \gls{fwer} restriction of $\delta=10^{-3}$.

\begin{figure}[t]
    \centering
    \includegraphics[width=0.99\linewidth, page=4]{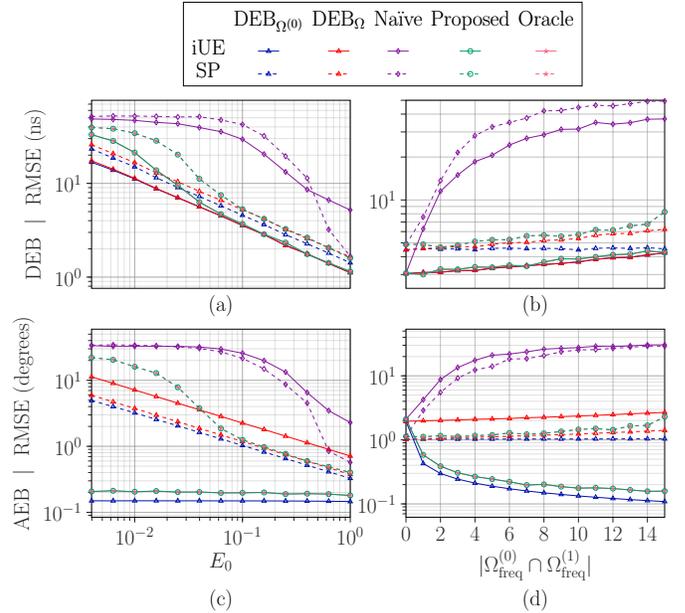}
    \caption{(a),(c) Comparison for varying transmission powers, $E_0$, when the number of overlapping subcarriers are 8. (b),(d) Comparison for varying number of overlapping subcarriers, when the transmission power is $E_0=0.1$. Settings: $\kappa=1$, $\delta=10^{-3}$, and $E_1=0.05$.}
    \label{fig:main}
\end{figure}

The main result is given in Figure~\ref{fig:main}: \glspl{deb} and \glspl{aeb} are compared to the \gls{rmse} with the proposed algorithm and the benchmarks for each of the targets, with $\kappa=1$, $\delta=10^{-3}$, and $E_1=0.05$.
We include different theoretical bounds: ${\rm DEB}_{\Omega^{(0)}}$ and ${\rm AEB}_{\Omega^{(0)}}$ which are the error bounds with all the used resources, and ${\rm DEB}_{\Omega}$ and ${\rm AEB}_{\Omega}$ giving the bounds when only using the resources without interference.
Theoretically, ${\rm DEB}_{\Omega^{(0)}}$ and ${\rm AEB}_{\Omega^{(0)}}$ are guaranteed to be lower than ${\rm DEB}_{\Omega}$ and ${\rm AEB}_{\Omega}$, respectively, however as Figure~\ref{fig:main} show, the improvement by using the resources with interference is only non-negligble for the angle to the iUE. Further, Figure~\ref{fig:main} illustrates the poor performance when not correcting for the interference with the naïve baseline, and also indicates that the proposed method is capable of (nearly) reaching the theoretical performance bounds when the transmit power $E_0$ is sufficiently large. In Figures~\ref{fig:main}(a),(c), the bounds are approached at $E_0=0.06$. Finally, note that the performance of proposed algorithm coincides with that of the oracle, demonstrating the statistical power of the interference detection technique and the robustness to varying conditions.


\section{Conclusion}\label{sec:conclusion}
This paper addressed the problem of mutual interference in spectrally interleaved OFDM-based ISAC systems with radar sensing. We derived an interference detection method that provably controls the familywise error rate, and proposed a channel parameter estimation technique exploiting interference for angle estimation while avoiding it for delay estimation. A tailored Fisher analysis confirmed that interfered resources are only useful for estimating the angle to the interferer, and numerical results demonstrated that the proposed algorithm accurately detects interference and achieves near-optimal delay–angle estimation. Future work includes extending the framework to also consider communication aspects and exploring multi-user tracking scenarios with adaptive resource access policies.


\appendices

\section{Channel parameters}\label{subsec:channel_parameters}
The complex path gains, denoted by $\alpha_{i}, \alpha_{i,l}$, correspond to the \gls{los}-path repetitively. scattering paths, respectively. They are modeled as $\alpha_{i}=|\alpha_{i}|{\rm exp}(j\nu_i)$, and $\alpha_{i,l}=|\alpha_{i,l}|{\rm exp}(j\nu_{i,l})$, where $\nu_{i},\nu_{i,l}\sim{\rm Uniform}[0,2\pi)$ and $|\alpha_{i}|^2=\lambda^2(4\pi)^{-2}\|\mathbf{p}_{i} \|^{-2}$, $|\alpha_{i,l}|^2=\lambda^2(4\pi)^{-3} \| \mathbf{p}_{l}- \mathbf{p}_{i} \|^{-2} \| \mathbf{p}_{l} \|^{-2}$.
The \gls{aoa}, \gls{aod} and \gls{toa} are modelled as \cite{AbuShaban2018Bounds}: $\theta_i=\arctantwo(p_{i,y},p_{i,x})$, $\phi_{i}=\mathbbm{1}[\theta_i \geq 0](\theta_i-\pi) + \mathbbm{1}[\theta_i < 0](\theta_i+\pi)$, $\tau_{i}=\|\bs{p}_{i}\|/c$, $\phi_{i,l}=\arctantwo(p_{l,y}-p_{i,y},p_{l,x}-p_{i,x})$, and $\tau_{i,l}=\tau_{l} + \|\bs{p}_{l}-\bs{p}_i\|/c$.

\section{Proofs of Proposition 1}\label{sec:proof_prop1}
Before proceeding to the proof of Proposition~\ref{prop:threshold}, we present the following technical lemma.
\begin{lemma}\label{lem:ratio_dist}
    Let $\gamma_1,\dots,\gamma_n$ be iid random variables with \gls{pdf} $f$ and \gls{cdf} $F$. Denote by $\gamma_{(1)}\leq \cdots \leq \gamma_{(n)}$ the order statistics. Then, for $\beta \geq 1$.
    \begin{align*}
        &\mathbb{P}(\gamma_{(n)} > \gamma_{(1)}\beta) \\
        &\:= n(n-1)\int_0^\infty \int_{\beta u}^\infty f(u)f(v)\big(F(v)-F(u)\big)^{n-2} {\rm d}v{\rm d}u.
    \end{align*}
\end{lemma}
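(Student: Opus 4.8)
The plan is to reduce the event $\{\gamma_{(n)} > \gamma_{(1)}\beta\}$ to an integral against the joint density of the two extreme order statistics $(\gamma_{(1)}, \gamma_{(n)})$, since the relevant statistic depends on the sample only through its minimum and maximum. First I would recall (or derive) the classical joint density of the smallest and largest order statistics; then I would express the target probability as the integral of this density over the region determined by $\gamma_{(n)} > \beta\gamma_{(1)}$; and finally I would use the hypothesis $\beta \ge 1$ to pin down the limits of integration.

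To obtain the joint density, I would start from the joint cumulative distribution function. For $u \le v$, the event $\{\gamma_{(1)} > u,\, \gamma_{(n)} \le v\}$ is exactly the event that all $n$ samples fall in $(u, v]$, so $\mathbb{P}(\gamma_{(1)} > u,\, \gamma_{(n)} \le v) = (F(v) - F(u))^n$. Combining this with $\mathbb{P}(\gamma_{(n)} \le v) = F(v)^n$ gives, for $u \le v$,
\[
    F_{\gamma_{(1)}, \gamma_{(n)}}(u, v) = \mathbb{P}(\gamma_{(1)} \le u,\, \gamma_{(n)} \le v) = F(v)^n - \big(F(v) - F(u)\big)^n.
\]
Differentiating once in $v$ and once in $u$ — the term $F(v)^n$ drops out under $\partial/\partial u$ — yields the joint density
\[
    f_{\gamma_{(1)}, \gamma_{(n)}}(u, v) = n(n-1)\, f(u) f(v) \big(F(v) - F(u)\big)^{n-2}, \qquad u \le v,
\]
and zero otherwise.

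With the density in hand, I would write the target probability as the integral of $f_{\gamma_{(1)}, \gamma_{(n)}}$ over $\{(u,v) : v > \beta u\}$. Here the assumption $\beta \ge 1$ is essential: it guarantees $\beta u \ge u$ for every $u \ge 0$, so the region $\{v > \beta u\}$ is contained in the support $\{v \ge u\}$ of the joint density and no separate ordering constraint is needed. Parameterizing by $u = \gamma_{(1)} \in (0,\infty)$ and $v = \gamma_{(n)} \in (\beta u, \infty)$ then directly reproduces the claimed double integral.

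The computation is essentially routine, so the only place needing care is the derivation of the joint density — specifically, tracking the combinatorial factor $n(n-1)$ and verifying that the cross-partial of the CDF is evaluated where the stated closed form holds, namely $u \le v$. A fully rigorous alternative that sidesteps the differentiation is a direct combinatorial argument: there are $n$ choices for which sample attains the minimum (contributing $f(u)\,\mathrm{d}u$), $n-1$ remaining choices for the maximum (contributing $f(v)\,\mathrm{d}v$), and the other $n-2$ samples must independently lie in $(u,v)$ (each contributing a factor $F(v) - F(u)$); multiplying these gives the same density. Either way, the role of $\beta \ge 1$ in fixing the limits of integration is the one substantive step, and the remainder follows by Fubini.
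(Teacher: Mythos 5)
Your proposal is correct and follows essentially the same route as the paper's proof: derive the joint CDF of $(\gamma_{(1)},\gamma_{(n)})$ via the identity $F(v)^n - (F(v)-F(u))^n$, differentiate to obtain the joint density $n(n-1)f(u)f(v)(F(v)-F(u))^{n-2}$ on $u \le v$, and integrate over $\{v > \beta u\}$ using $\beta \ge 1$ to keep the region inside the support. The additional combinatorial derivation of the density and the explicit remark on why $\beta \ge 1$ matters are nice touches, but the argument is the same as the paper's.
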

\begin{proof}[Proof of Lemma 1]\label{proof:proof_lemma1}
    First we derive the \gls{cdf} for $(\gamma_{(n)}, \gamma_{(1)})$:
    \begin{align*}
        F_{\gamma_{(n)}, \gamma_{(1)}}(v,u) &= \mathbb{P}(\gamma_{(n)} \leq v, \gamma_{(1)} \leq u)\\
        \begin{split}
        &=\mathbb{P}(\gamma_1 \leq v, \dots, \gamma_n \leq v)\\
        &\qquad - \mathbb{P}(u < \gamma_1 \leq v, \dots, u < \gamma_n \leq v)\\
        \end{split}\\
        &= F(v)^n - (F(v) - F(u))^n,
    \end{align*}
    for $v \geq u$.
    Now, by differentiation we find the \gls{pdf}:
    \begin{align*}
        f_{\gamma_{(n)}, \gamma_{(1)}}(v,u) &= \frac{{\rm d}}{{\rm d}v} \frac{{\rm d}}{{\rm d}u} F_{\gamma_{(n)}, \gamma_{(1)}}(v,u)\\
        &= n (n-1) f(u) f(v) (F(v) - F(u))^{n-2},
    \end{align*}
    for $v \geq u$. Finally, for $\beta \geq 1$
    \begin{equation*}
        \mathbb{P}(\gamma_{(n)} > \gamma_{(1)} \beta) = \int_0^\infty \int_{\beta u}^\infty f_{\gamma_{(n)}, \gamma_{(1)}}(v,u) {\rm d}v{\rm d}u.
    \end{equation*}
\end{proof}
\vspace{-12pt}
\begin{proof}[Proof of Proposition 1]\label{proof:proof_prop1}
    By Lemma~\ref{lem:ratio_dist}, for $\beta \geq 1$,
    \begin{equation*}
        \begin{split}
            &\mathbb{P}(\gamma_{(n)} > \gamma_{(1)}\beta) = \frac{n(n-1)}{\varsigma^{4\varrho} (\varrho-1)!^n} \int_0^\infty \int_{\beta u}^\infty \Big((uv)^{\varrho-1} \\
            &\quad \times {\rm exp}\Big(-\frac{u+v}{\varsigma^2}\Big) \Big(\int_{u/\varsigma^2}^{v/\varsigma^2} t^{\varrho-1} {\rm exp}(-t) {\rm d}t\Big)^{n-2} {\rm d}v {\rm d}u\Big).
        \end{split}
    \end{equation*}
    By a variable substitution $x=u/\varsigma^2$ and $y=v/\varsigma^2$
    \begin{equation*}
        \begin{split}
            &\mathbb{P}(\gamma_{(n)} > \gamma_{(1)}\beta) = \frac{n(n-1)}{(\varrho-1)!^n} \int_0^\infty \int_{\beta x}^\infty \Big((xy)^{\varrho-1} \\
            &\quad \times {\rm exp}(-(x+y)) \Big(\int_{x}^{y} t^{\varrho-1} {\rm exp}(-t) {\rm d}t\Big)^{n-2} {\rm d}y {\rm d}x\Big).
        \end{split}
    \end{equation*}
\end{proof}
\vspace{-10pt}


\ifCLASSOPTIONcaptionsoff
  \newpage
\fi

\bibliographystyle{IEEEtran}
\bibliography{bibliography}

@article{AbuShaban2018Bounds,
    author={Abu-Shaban, Zohair and Zhou, Xiangyun and Abhayapala, Thushara and Seco-Granados, Gonzalo and Wymeersch, Henk},
    journal={IEEE Trans. Wirel. Commun.}, 
    title={Error Bounds for Uplink and Downlink {3D} Localization in {5G} Millimeter Wave Systems}, 
    year={2018},
    volume={17},
    number={8},
    pages={4939-4954},
    doi={10.1109/TWC.2018.2832134}
}

@book{Stoica2005:Spectral,
    author={Stoica, Petre and Moses, Randolph L.},
    title={Spectral analysis of signals},
    year={2005},
    isbn={9780131139565},
    lccn={2004043191},
    publisher={Pearson Prentice Hall}
}

@article{Vejling2025:Resolution,
    author={Vejling, Martin V. and Kim, Hyowon and Biscio, Christophe A. N. and Wymeersch, Henk and Popovski, Petar},
    journal={IEEE Trans. Signal Process.}, 
    title={{RIS}-Assisted High Resolution Radar Sensing}, 
    year={2025},
    volume={73},
    number={},
    pages={2940-2955},
    doi={10.1109/TSP.2025.3586551}
}

@article{Zheng2024:JrCUP,
    author={Zheng, Pinjun and Chen, Hui and Ballal, Tarig and Valkama, Mikko and Wymeersch, Henk and Al-Naffouri, Tareq Y.},
    journal={IEEE Trans. Wirel. Commun.}, 
    title={{JrCUP}: Joint {RIS} Calibration and User Positioning for 6G Wireless Systems}, 
    year={2024},
    volume={23},
    number={6},
    pages={6683-6698},
    doi={10.1109/TWC.2023.3336177}
}

@inproceedings{Nuss2017:Novel,
    author={Nuss, Benjamin and Zwick, Thomas},
    booktitle={European Radar Conf.},
    title={A novel interference mitigation technique for {MIMO} {OFDM} radar using compressed sensing},
    year={2017},
    volume={},
    number={},
    pages={98-101},
    doi={10.23919/EURAD.2017.8249156}
}

@article{Sit2018:Mutual,
    author={Sit, Yoke Leen and Nuss, Benjamin and Zwick, Thomas},
    journal={IEEE Trans. Veh. Technol.},
    title={On Mutual Interference Cancellation in a {MIMO} {OFDM} Multiuser Radar-Communication Network},
    year={2018},
    volume={67},
    number={4},
    pages={3339-3348},
    doi={10.1109/TVT.2017.2781149}
}

@article{Anderson1963:Asymptotic,
    author = {T. W. Anderson},
    title = {Asymptotic Theory for Principal Component Analysis},
    volume = {34},
    journal = {Ann. Math. Stat.},
    number = {1},
    publisher = {Institute of Mathematical Statistics},
    pages = {122 -- 148},
    year = {1963},
    doi = {10.1214/aoms/1177704248}
}

@article{Niu2025:Interference,
    author={Niu, Yangyang and Wei, Zhiqing and Wang, Lin and Wu, Huici and Feng, Zhiyong},
    journal={IEEE Internet Things J.}, 
    title={Interference Management for Integrated Sensing and Communication Systems: A Survey}, 
    year={2025},
    volume={12},
    number={7},
    pages={8110-8134},
    doi={10.1109/JIOT.2024.3506162}
}

@article{Liu2022:Integrated,
    author={Liu, Fan and Cui, Yuanhao and Masouros, Christos and Xu, Jie and Han, Tony Xiao and Eldar, Yonina C. and Buzzi, Stefano},
    journal={IEEE J. Sel. Areas Commun.}, 
    title={Integrated Sensing and Communications: Toward Dual-Functional Wireless Networks for {6G} and Beyond}, 
    year={2022},
    volume={40},
    number={6},
    pages={1728-1767},
    doi={10.1109/JSAC.2022.3156632}
}

@article{Liu2022:Fundamental,
    author={Liu, An and Huang, Zhe and Li, Min and Wan, Yubo and Li, Wenrui and Han, Tony Xiao and Liu, Chenchen and Du, Rui and Tan, Danny Kai Pin and Lu, Jianmin and Shen, Yuan and Colone, Fabiola and Chetty, Kevin},
    journal={IEEE Commun. Surv. Tutor.}, 
    title={A Survey on Fundamental Limits of Integrated Sensing and Communication}, 
    year={2022},
    volume={24},
    number={2},
    pages={994-1034},
    doi={10.1109/COMST.2022.3149272}
}

@article{Liu2025:OFDM,
    author={Liu, Fan and Zhang, Ying and Xiong, Yifeng and Li, Shuangyang and Yuan, Weijie and Gao, Feifei and Jin, Shi and Caire, Giuseppe},
    journal={IEEE Trans. Inf. Theory}, 
    title={{CP-OFDM} Achieves the Lowest Average Ranging Sidelobe Under {QAM/PSK} Constellations}, 
    year={2025},
    volume={71},
    number={9},
    pages={6950-6967},
    doi={10.1109/TIT.2025.3591267}
}

@article{Gonzalez2024:Integrated,
    author={González-Prelcic, Nuria and Furkan Keskin, Musa and Kaltiokallio, Ossi and Valkama, Mikko and Dardari, Davide and Shen, Xiao and Shen, Yuan and Bayraktar, Murat and Wymeersch, Henk},
    journal={Proc. IEEE}, 
    title={The Integrated Sensing and Communication Revolution for {6G}: Vision, Techniques, and Applications}, 
    year={2024},
    volume={112},
    number={7},
    doi={10.1109/JPROC.2024.3397609}
}

@article{Sturm2013:Spectrally,
    author = {Sturm, Christian and Sit, Yoke Leen and Braun, Martin and Zwick, Thomas},
    title = {Spectrally interleaved multi-carrier signals for radar network applications and multi-input multi-output radar},
    journal = {IET Radar Sonar Navig.},
    volume = {7},
    number = {3},
    pages = {261-269},
    doi = {10.1049/iet-rsn.2012.0040},
    year = {2013}
}

@article{Liu2024:Next,
    author={Liu, Yaxi and Huang, Tianyao and Liu, Fan and Ma, Dingyou and Huangfu, Wei and Eldar, Yonina C.},
    journal={Proc. IEEE}, 
    title={Next-Generation Multiple Access for Integrated Sensing and Communications}, 
    year={2024},
    volume={112},
    number={9},
    pages={1467-1496},
    doi={10.1109/JPROC.2024.3449807}
}

@article{Liu2025:Fundamental,
    author={Liu, Yao and Li, Min and Liu, An and Ong, Lawrence and Yener, Aylin},
    journal={IEEE Trans. Inf. Theory}, 
    title={Fundamental Limits of Multiple-Access Integrated Sensing and Communication Systems}, 
    year={2025},
    volume={71},
    number={6},
    pages={4317-4341},
    doi={10.1109/TIT.2025.3553111}
}

@article{Gummalla2000:Wireless,
    author={Gummalla, Ajay Chandra V. and Limb, John O.},
    journal={IEEE Commun. Surv. Tutor.}, 
    title={Wireless medium access control protocols}, 
    year={2000},
    volume={3},
    number={2},
    pages={2-15},
    doi={10.1109/COMST.2000.5340799}
}

@article{Stoica1989:MUSIC,
    author={Stoica, P. and Nehorai, Arye},
    journal={IEEE Trans. Acoust. Speech Signal Process.}, 
    title={{MUSIC}, maximum likelihood, and Cramer-Rao bound}, 
    year={1989},
    volume={37},
    number={5},
    pages={720-741},
    doi={10.1109/29.17564}
}

@article{Tropp2007:OMP,
    author={Tropp, Joel A. and Gilbert, Anna C.},
    journal={IEEE Trans. Inf. Theo.}, 
    title={Signal Recovery From Random Measurements Via Orthogonal Matching Pursuit}, 
    year={2007},
    volume={53},
    number={12},
    pages={4655-4666},
    doi={10.1109/TIT.2007.909108}
}

@article{Murty1968:Algorithm,
    author = {Katta G. Murty},
    journal = {Oper. Res.},
    number = {3},
    pages = {682--687},
    publisher = {INFORMS},
    title = {An Algorithm for Ranking all the Assignments in Order of Increasing Cost},
    volume = {16},
    year = {1968}
}

\end{document}